\newtheorem{theorem}{Theorem}
\newtheorem{definition}{Definition}
\newtheorem{remark}{Remark}
\newtheorem{lemma}{Lemma}
\newtheorem{proposition}{Proposition}
\newcommand{\calA}{\mathcal{A}}
\newcommand{\zplus}{\mathbb{Z}^+}
\newcommand{\prodiS}{\prod_{i \in \partial S} }
\title{ Lattice models and super telescoping formula}
\author{Mohammad Javad Latifi Jebelli\footnote{Email: mohammad.javad.latifi.jebelli@dartmouth.edu} \\ Dartmouth College}
\date{January 2023}
\begin{document}

\maketitle

 \begin{abstract}

  In this paper, we introduce the super telescoping formula, a natural generalization of well-known telescoping formula.  We  explore various aspects of the formula including its origin and the telescoping cancellations emerging from symmetric patterns. We also show that the super telescoping formula leads to the construction of exactly solvable lattice models with interesting partition functions.
 \end{abstract}

\section{Super Telescoping Formula}\label{sec:suptel}

 Let $S\subset \mathbb{R}$ be a finite union of intervals of the form $[a,b]$ with $a,b \in \mathbb{Z}^+=\{ 0,1,2,\dots\}$. If $S = [a_1, b_1] \cup \dots \cup [a_L,b_L]$ the boundary of $S$ is denoted by $\partial S = \{a_1, b_1, \dots , a_L, b_L \} \subset \mathbb{Z}^+$. Let us associate a number (weight) to such integer subset, 

 $$
 w(S) = \prod_{i\in \partial S} \frac{1}{i+1} = \prod_{j=1}^L \frac{1}{(a_j +1)(b_j +1)}
 $$

As in figure \ref{fig:int_subset_024568}, It is beneficial to picture an integer subset $S$ as a collection of line segments on the real line. Next, we are going to sum $w(S)$ over certain family of integer subsets. In the case where this integer subsets are single line segments of fixed length, this gives the well known telescoping formula from calculus. More specifically think of a line segment of fixed length $1$ moving in the positive direction of the real line. In other words, let $S_k = [k-1,k],\, k=1,2,3,\dots$ and hence $w(S_k) = \frac{1}{k(k+1)}$. The telescoping cancellation comes from the identity $w(S_k) = \frac{1}{k} - \frac{1}{k+1}$  , and implies that $\sum_{k=1}^{\infty} w(S_k) = 1$. 

\begin{figure}[!h]
    \centering
    \includegraphics[width=300px]{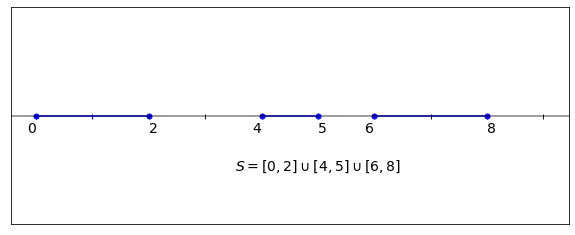}
    \caption{An example of integer subset $S\in \mathcal{A}_5$ with $w(S) = \frac{1}{(1)(3)} \cdot \frac{1}{(5)(6)} \cdot \frac{1}{(7)(9)}$.}
    \label{fig:int_subset_024568}
\end{figure}


Now, the question is what happens in a more general case? Fix some $n\in \mathbb{N}$ and consider the family $\mathcal{A}_n$ of all integer subsets with total length of $n$. That means  $S = [a_1, b_1] \cup \dots \cup [a_L,b_L]$ belongs to $\mathcal{A}_n$ if and only if $|S| = \sum_{j=1}^L (b_j - a_j) = n$. Figure \ref{fig:int_subset_size3} demonstrates some examples of sets in $\mathcal{A}_3$. As a surprsing special case, the super telescoping formula implies $\sum_{S\in \calA_n} w(S) = 1$, independent from the choice of $n$.

$$
\sum_{S\in \calA_n} \left( \prod_{i\in \partial S} \frac{1}{i+1} \right) = 1
$$

\begin{figure}[!h]
    \centering
    \includegraphics[width=250px]{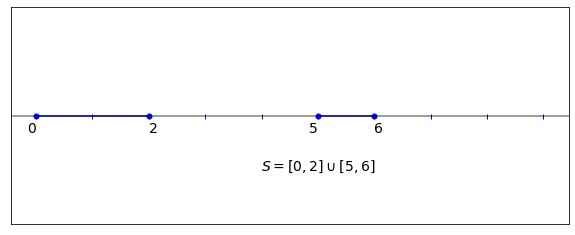}
    \includegraphics[width=250px]{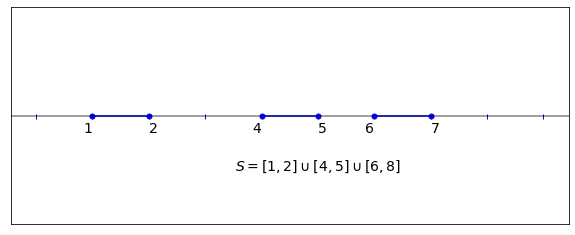}
    \includegraphics[width=250px]{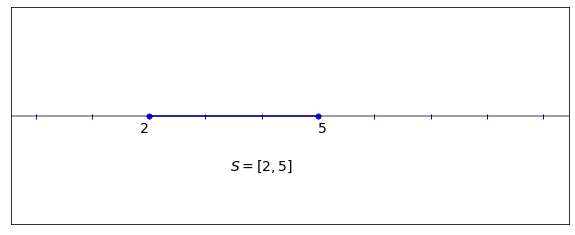}

    \caption{Examples of integer subsets in $\mathcal{A}_3$. These sets have total length $3$ and can be positioned anywhere on the real line. }
    \label{fig:int_subset_size3}
\end{figure}


 We study the following generalization of the above sum. For arbitrary non-zero parameter $\beta$, one has (see proof of theorem \ref{thm:supertel})

\begin{equation}\label{eq:var-formula}
\sum_{S\in \mathcal{A}_n} \prodiS \frac{1}{i\beta +1 } = \left(\begin{array}{c} \beta^{-1} + n -1 \\ n \end{array}\right)
\end{equation}

It is easy to see that for $\beta=1$ the right hand side is constant 1, independent of $n$.  Comparing elementary telescoping sums, the cancellation patterns here are more complex and we will discuss them in section \ref{sec:tel-cancellation}. These cancellation patterns involve an interesting family of symmetries of subsets of integers. Despite the fact that the left-hand side of formula \ref{eq:var-formula} has poles at $\beta = -\frac{1}{i}$ for every integer $i$, these poles are removable. We provide explicit proof of the following theorem using cancellations.


\begin{theorem} \label{thm:poles}
In terms of the variable $\beta$, all of the poles appearing on the left-hand side of formula \ref{eq:var-formula} are removable. 
\end{theorem}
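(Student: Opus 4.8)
\section{Proof proposal for Theorem \ref{thm:poles}}

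The plan is to localize the problem at a single prospective pole, turn the removability into the vanishing of a sum of residues, and then realize that vanishing through a reflection symmetry of integer subsets, with the one–sidedness of $\zplus$ absorbed by the elementary telescoping already used in the introduction. Fix a positive integer $m$ (the boundary value $i=0$ contributes the factor $\tfrac{1}{0\cdot\beta+1}=1$ and produces no pole, so only $m\geq 1$ matters). In every summand of the left-hand side of \eqref{eq:var-formula} the factor $\tfrac{1}{m\beta+1}$ is the unique source of a singularity at $\beta=-1/m$, and it occurs at most once because $\partial S$ is a set; hence each summand has at most a simple pole there, and only the subsets $S\in\calA_n$ with $m\in\partial S$ contribute. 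Consequently it suffices to show that the total residue vanishes, and a direct computation gives
\[
\operatorname{Res}_{\beta=-1/m}\ \prod_{x\in\partial S}\frac{1}{x\beta+1}
=\frac{1}{m}\prod_{x\in\partial S\setminus\{m\}}\frac{m}{m-x}.
\]
So the entire theorem reduces to the single combinatorial identity
$\sum_{S\in\calA_n,\ m\in\partial S}\ \prod_{x\in\partial S\setminus\{m\}}\frac{m}{m-x}=0$.

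The key symmetry is the reflection across $m$, namely $\sigma_m(x)=2m-x$ acting on boundary points, which I expect to be the first of the ``family of symmetries'' promised in the introduction. It is an isometry, so it preserves the total length and keeps $S$ inside $\calA_n$; it fixes $m$, so it preserves the condition $m\in\partial S$; and it sends each residue factor to its negative, since $\tfrac{m}{m-(2m-x)}=\tfrac{m}{x-m}=-\tfrac{m}{m-x}$. As there are $|\partial S|-1=2L-1$ such factors, an odd number, $\sigma_m$ reverses the sign of the residue. Moreover no admissible $S$ is fixed by $\sigma_m$: a subset symmetric about $m$ would contain a neighborhood of $m$ on both sides and so would have $m$ in its interior, contradicting $m\in\partial S$. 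Thus $\sigma_m$ is a fixed-point-free, sign-reversing involution, and every $S$ with $\partial S\subseteq[0,2m]$ cancels against its mirror image $\sigma_m S$.

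The obstruction — and the step I expect to be the main difficulty — is exactly the one-sided constraint $\partial S\subset\zplus$: when $\max\partial S>2m$ the reflected subset $\sigma_m S$ acquires negative boundary points, falls outside $\calA_n$, and the pairing breaks. The remaining task is therefore to show that the leftover terms, those $S$ with $m\in\partial S$ and $\max\partial S\geq 2m+1$, already sum to zero on their own. Here I would reintroduce the elementary telescoping of the introduction: group the leftover configurations by everything lying weakly to the left of $2m$ and sum over the placement of the part extending to the right; the resulting inner sum should collapse through the identity $\tfrac{1}{i}-\tfrac{1}{i+1}$, exactly as in the $n=1$ case and as one checks by hand for small $m,n$, thereby reducing a leftover family at total length $n$ to a residue sum at strictly smaller length and closing an induction on $n$ (with the reflection handling the ``bulk'' and the telescoping handling the ``tail'' at each stage). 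Two bookkeeping points must be dispatched along the way: the sums are infinite, so I would first verify absolute convergence (the factors $\tfrac{m}{m-x}$ decay quickly as $x\to\infty$) to license the rearrangement into reflection pairs and telescoping chains, and I would confirm the base cases $n=0,1$ directly, where the cancellation is either vacuous or the plain telescoping pairing $[m-1,m]\leftrightarrow[m,m+1]$.
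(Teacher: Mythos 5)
Your opening reduction is correct, and it is in fact identical to the paper's: since $\partial S$ is a set, each summand has at most a simple pole at $\beta=-1/m$, its residue is $\frac{1}{m}\prod_{x\in\partial S\setminus\{m\}}\frac{m}{m-x}=C_m(S)/m$ in the paper's notation (\ref{c-q-s}), and removability becomes exactly the vanishing claim (\ref{q-sets-sum}). Your global reflection $\sigma_m(x)=2m-x$ is also sound as far as it goes: it is a fixed-point-free, sign-reversing involution (the count $2L-1$ of factors is odd, and $L$ is preserved by a global isometry). The genuine gap is what you yourself flag as ``the main difficulty'': the sets with $\max\partial S>2m$, which $\sigma_m$ maps out of $\zplus$. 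Your plan for them (group by the part left of $2m$, telescope over the tail, induct on $n$) is a heuristic, not an argument, and it is not a peripheral case to be cleaned up later \emph{--- it is the whole problem whenever $n>2m$}, since then no set of total length $n$ with $m\in\partial S$ fits inside $[0,2m]$ and your pairing is vacuous. Concretely, for $n=2$, $m=1$ the contributing sets are $[1,3]$, $[0,1]\cup[a,a+1]$ $(a\ge 2)$, and $[1,2]\cup[a,a+1]$ $(a\ge 3)$; every one of them is ``leftover,'' the groups in your decomposition do not vanish individually (they contribute $1$, $-\tfrac12$, $-\tfrac12$), and the promised reduction to a smaller-$n$ residue sum does not emerge without a new idea.

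The missing idea, which is precisely how the paper closes this hole, is to reflect only inside a window rather than globally: the $d$-reflection $R_d(S)$ reflects the portion of $S$ lying in $[q-d,q+d]$ about $q$ and leaves the rest of $S$ fixed. Taking $d=q$ (window $[0,2q]$) pairs \emph{every} set with $q\in\partial S$, including those extending past $2q$, because the tail is simply carried along unchanged; your $\sigma_m$ is the special case of $R_q$ acting on sets contained in $[0,2m]$. The price is that a windowed reflection can change the number of components, and the paper pays it with a one-line computation: when $2q$ is interior to $S$, the reflection creates the two new boundary points $0$ and $2q$, whose factors contribute $(q-0)(q-2q)=-q^2$, exactly compensating the change of numerator from $q^{2L(S)-1}$ to $q^{2L(S)+1}$, so $C_q(R_q(S))=-C_q(S)$ still holds (lemma \ref{gapreflectionlemma} and the lemma following it). In your $n=2$, $m=1$ example this windowed reflection pairs $[1,3]\leftrightarrow[0,1]\cup[2,3]$ and $[1,2]\cup[a,a+1]\leftrightarrow[0,1]\cup[a,a+1]$ for $a\ge3$, which is exactly the cancellation pattern you observed but could not produce. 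The paper also works with a cutoff $[0,M]$ and needs a second pairing (via the minimum symmetric gap, which guarantees the component count is preserved) for boundary points $q$ with $n\le q\le M-n$; some version of this bookkeeping, or your absolute-convergence argument done carefully, is needed to justify the term-by-term rearrangement. With the windowed reflection in hand, no induction on $n$ is needed at all.
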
 

 In addition to the above statement, in section \ref{sec:tel-cancellation} we prove explicitly that the left-hand side of formula \ref{eq:var-formula} reduces to a polynomial of degree $n$ in $\beta^{-1}$, consistent with the right-hand side. However, establishing explicit and direct proof for formula \ref{eq:var-formula} or the following super-telescoping formula remains as open problem (despite the simplicity of the expression, the proof seems to be surprisingly challenging when $\beta \neq 1$).  Here we give an indirect proof based on results from random matrix theory and the ideas in \cite{Latifi-Pickrell}, \\



\begin{theorem}\label{thm:supertel}
(\textbf{Super telescoping formula}) Let $\mathcal{A}=\mathcal{A}_0 \cup \mathcal{A}_1 \cup \mathcal{A}_2 \cup \dots$ denote the family of integer subsets of $\mathbb{Z}^+$ with arbitrary size, then  
\begin{equation}\label{eq:super_tel}
     \sum_{S\in \calA} \left( \prodiS \frac{1}{i\beta +1 } \right) x^{|S|} = \frac{1}{(1-x)^{1/\beta}}
\end{equation}
\end{theorem}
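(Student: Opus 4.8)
The plan is to first reduce Theorem~\ref{thm:supertel} to the closed-form coefficient identity \eqref{eq:var-formula}. Grouping the sum in \eqref{eq:super_tel} by $|S|=n$ gives $\sum_{S\in\calA}\bigl(\prodiS (i\beta+1)^{-1}\bigr)x^{|S|}=\sum_{n\ge0}\bigl(\sum_{S\in\calA_n}\prodiS (i\beta+1)^{-1}\bigr)x^n$, so the theorem is equivalent, coefficient by coefficient, to the assertion that the inner sum equals $\binom{\beta^{-1}+n-1}{n}$, since the generalized binomial theorem gives $\sum_{n\ge0}\binom{\beta^{-1}+n-1}{n}x^n=(1-x)^{-1/\beta}$. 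Thus it suffices to evaluate the generating function directly and read off its Taylor coefficients; I would carry the parameter $x$ throughout rather than fixing $n$, since the product form is more rigid and easier to attack than the individual binomial values.

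The main computational engine I would use is a transfer-matrix reformulation. Encode an integer subset $S$ by the binary sequence $(e_i)_{i\ge 0}$ with $e_i=1$ iff the unit cell $[i,i+1]\subset S$, together with $e_{-1}:=0$; then $|S|=\sum_i e_i$ and $\partial S=\{\,i:\,e_{i-1}\ne e_i\,\}$, so each boundary point is precisely a site where the filling switches. With $t_i:=(i\beta+1)^{-1}$ the weighted sum \eqref{eq:super_tel} becomes $\sum_{(e_i)} x^{\sum_i e_i}\prod_{i:\,e_{i-1}\ne e_i} t_i$, which is exactly the $(0,0)$ entry of the infinite product $\prod_{i\ge 0}M_i$ of the explicit $2\times 2$ matrices
\[
M_i=\begin{pmatrix} 1 & t_i x\\ t_i & x \end{pmatrix},
\]
where the indices record the state ``previous cell empty/filled'' and the admissible paths must start and end in the empty state. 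Writing $u_i=(f_i,g_i)^{\mathsf T}$ for the tail sums starting at site $i$ in the empty/filled state, the identity $u_i=M_i u_{i+1}$ with boundary data $u_\infty=(1,0)^{\mathsf T}$ yields $f_i=f_{i+1}+t_i x\,g_{i+1}$ and $g_i=t_i f_{i+1}+x\,g_{i+1}$, hence the Möbius recursion $r_i=\frac{t_i+x r_{i+1}}{1+t_i x r_{i+1}}$ for $r_i=g_i/f_i$ (with $r_\infty=0$) and the product formula $F(x)=f_0=\prod_{i\ge 0}\bigl(1+t_i x\,r_{i+1}\bigr)$. The theorem then amounts to showing this infinite product equals $(1-x)^{-1/\beta}$.

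I expect the evaluation of this infinite product to be the main obstacle, and the source of the difficulty the paper flags for $\beta\ne 1$. When $\beta=1$ the sum collapses by the elementary telescoping of section~\ref{sec:tel-cancellation}, giving the constant $1$; but for general $\beta$ the spectral data of $M_i$ varies with $i$, no finite telescoping survives, and the product genuinely accumulates, which is exactly why a direct evaluation is delicate. One elementary route to finish would be to verify the claimed answer through the first-order differential equation $\frac{d}{dx}\log F=\frac{1}{\beta}\,\frac{1}{1-x}$ with $F(0)=1$, reducing the problem to a closed summation identity for the logarithmic derivative in terms of the $r_i$; I anticipate this is precisely where a purely combinatorial argument stalls.

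The route I would actually expect to succeed is the indirect one indicated in the paper: recognize $(1-x)^{-1/\beta}$ as a moment (partition-function) generating identity for a $\beta$-ensemble, expand that expectation over a basis naturally indexed by the integer subsets $S$, and match each term with $\prodiS t_i\cdot x^{|S|}$ using the computations of \cite{Latifi-Pickrell}. Here the points demanding the most care would be the bookkeeping that pairs each subset with the correct contribution on the analytic side, and controlling the removability of the apparent poles at $\beta=-1/i$ (Theorem~\ref{thm:poles}), which must be handled before the coefficient-by-coefficient comparison can be declared rigorous.
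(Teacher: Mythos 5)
There is a genuine gap: neither of your two routes is carried to completion, so what you have is a correct reduction plus a plan, not a proof. Your opening step --- grouping by $|S|=n$ and invoking the generalized binomial theorem --- is right, and it is exactly how the paper passes between \eqref{eq:var-formula} and \eqref{eq:super_tel}; but it relocates the entire difficulty into the coefficient identity $\sum_{S\in\calA_n}\prodiS\frac{1}{i\beta+1}=\binom{\beta^{-1}+n-1}{n}$, which you never prove. Your transfer-matrix reformulation is set up consistently (the matrices $M_i$, the M\"obius recursion for $r_i=g_i/f_i$, and the product formula $f_0=\prod_{i\ge 0}(1+t_i x\,r_{i+1})$ all check out formally), but you then concede you cannot evaluate the infinite product; an equivalent rewriting of the sum is not progress toward its value. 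Indeed the paper's Appendix I remarks that the only known ``generating function'' for this sum is precisely such a matrix product (the $SU(1,1)$ loop-group factorization with entries $\eta_i=\frac{1}{i\beta+1}$), and that evaluating it is a calculation \emph{involving} the super telescoping formula --- so as it stands this route is circular, not merely incomplete.

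The route that actually closes the argument has concrete content that your final paragraph only gestures at. The paper invokes the Chhaibi--Najnudel correspondence \cite{Chhaibi} between the Gaussian free field measure $\prod_n \frac{n\beta}{\pi}e^{-n\beta|f_n|^2}d\lambda(f_n)$ on Fourier coefficients and the explicit measure $\prod_n \frac{n\beta}{\pi}(1-|\alpha_n|^2)^{n\beta-1}d\lambda(\alpha_n)$ on Verblunsky coefficients, defines $x_n$ by $e^{-f_+}=1+\sum x_n z^n$ and $\hat{x}_n$ through the loop-group factorization, and then computes both variances: on the Gaussian side $E(x_n x_n^*)=\binom{\beta^{-1}+n-1}{n}$ (a multi-index Wick computation, section 5.2 of \cite{Latifi-Pickrell}), and on the Verblunsky side $E(\hat{x}_n\hat{x}_n^*)=\sum_{S\in\calA_n}\prodiS\frac{1}{i\beta+1}$, where the integer subsets index the terms surviving in the expansion of the matrix product and the weight $\frac{1}{i\beta+1}$ is exactly $E(|\alpha_i|^2)$ under that measure. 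Equating the two variances, justified because the induced measures on $\{x_n\}$ and $\{\hat{x}_n\}$ have identical moments, \emph{is} the proof of \eqref{eq:var-formula}. Your sketch names none of these objects: you do not identify which ensemble quantity equals $(1-x)^{-1/\beta}$, why its expansion is indexed by integer subsets, or why the weights come out to $\prodiS(i\beta+1)^{-1}$; those points are the theorem, not bookkeeping. One smaller correction: removability of the poles at $\beta=-1/i$ is not a prerequisite for this comparison, since the probabilistic identity is established for real $\beta>0$ where every term is finite; Theorem \ref{thm:poles} is a separate, logically independent statement proved by the cancellation machinery of section \ref{sec:tel-cancellation}.
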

\begin{proof} 
Following the result of Chhaibi and Najnudel \cite{Chhaibi},  we have a correspondence of probability measures  (see Appendix I for details)

$$ \prod_{n=1}^{\infty}\frac{n\beta}{\pi}e^{-n\beta\vert
f_n\vert^2}d\lambda(f_n) \longleftrightarrow \prod_{n=1}^{\infty}
\frac{n\beta}{\pi}(1-\vert\alpha_{n}\vert^2)^{n\beta-1}d\lambda(\alpha_{n}) $$
here the trace of a random function $f(z) = f_-(z) + f_0 + f_+(z) = \sum f_n z^n$ on the boundary of the unit disk gives a random measure on $S^1$, leading to a random Verblunsky sequence $\alpha_n$. If we define $x_n$ (in terms of $f_n$) by 
$$
e^{-f_+(z)}=1+\sum_{1}^{\infty} x_{n} z^{n}
$$
and $ \hat{x}_n$ (in terms of $\alpha_n$) by 

\begin{equation} \label{eq:def_xhat}
    r(z) = \gamma(z)+\delta(z)=1+\sum_{n=1}^\infty \hat{x}_{n} z^{k}
\end{equation}
where 
$$
\left(\begin{array}{ll}
\delta^{*}(z) & \gamma^{*}(z) \\
\gamma(z) & \delta(z)
\end{array}\right)=: \prod_{j=1}^{k}\left(\begin{array}{cc}
1 & \alpha_{j}^{*} z^{-j} \\
\alpha_{j} z^{j} & 1
\end{array}\right)
$$
Appendix I and the analysis in \cite{Latifi-Pickrell} imply that the induced random measures on $x_n$ and $\hat{x}_n$ have identical moments. In particular, this leads to the variance identity  $E(x_n x_n^*) = E(\hat{x}_n \hat{x}_n^*)$, that reads 

\begin{equation*}
     \sum_{S\in \calA_n}\left( \prodiS \frac{1}{i\beta +1 } \right) = \left(\begin{array}{c} \beta^{-1} + n -1 \\ n \end{array}\right)
\end{equation*}
Using Newton's binomial formula we can obtain a sum which now runs over integer subsets of arbitrary size
\begin{align*}\label{eq:sum-super-telescop}
       \sum_{S\in \calA} \left( \prodiS \frac{1}{i\beta +1 } \right) x^{|S|} &=  \sum_{n=0}^{\infty} \sum_{S\in \mathcal{A}_n} \left( \prodiS \frac{1}{i\beta +1 } \right) x^n \\
       &= \sum_{n=0}^{\infty} \left(\begin{array}{c} \beta^{-1} + n -1 \\ n \end{array}\right) x^n \\
    &= \frac{1}{(1-x)^{1/\beta}}
\end{align*}

\end{proof}

\noindent 

For a real number $|x|<1$, above statement defines a probability measure on sets in $\calA$ with normalized density at $S\in \calA$ given by 

\begin{equation}\label{eq:measure-super-telescop}
S \quad \longrightarrow \quad (1-x)^{1/\beta}  \left( \prodiS \frac{1}{i\beta +1 } \right) x^{|S|} 
\end{equation}

Another way of producing measures on integer subsets comes from lattice models in statistical mechanics where the spin at each site can be either up or down, where the spin up configuration characterizes elements of the integer subset . We shall construct such  exactly solvable Ising models.  \\






 Here is the organization for the rest of the paper. In section \ref{sec:tel-cancellation} we explore the cancellation patters and prove theorem \ref{thm:poles}. In section \ref{sec:lattice} we study statistical physics lattice models that are exactly solvable as a consequence of super telescoping formula, with surprisingly simple partition functions.


\section{Telescoping Cancellations}\label{sec:tel-cancellation}

Following the discussion in section \ref{sec:suptel}, formula \ref{eq:var-formula} can be seen as a generalization of the telescoping formula in calculus. To see the cancellation patterns and to prove theorem \ref{thm:poles} we need some machinery. First, a partial fraction decomposition of the terms appearing in the sum. Second, we need a geometric visualization of integer subsets and specific families of 'reflections'. The reflections are going to pair integer subsets and provide necessary cancellations in the sum. \footnote{Part of this section has been presented and later removed from the arxiv preprint of \cite{Latifi-Pickrell}}

\subsection{Partial Fraction Formula}

Let us recall formula \ref{eq:var-formula}, 
$$
\sum_{S\in \mathcal{A}_n} \prodiS \frac{1}{i\beta +1 } = \left(\begin{array}{c} \beta^{-1} + n -1 \\ n \end{array}\right)
$$
To explore the cancellations in this infinite sum, note that for any polynomial of the form $Q(x)=(x-\alpha_1)\dots(x-\alpha_m)$, we can write a partial fraction formula as follows
 
 \[
 \frac{1}{Q(x)} = \sum_{i=1}^m \frac{1}{Q'(\alpha_i)} \frac{1}{x-\alpha_i}
  \]
In our case, for a fixed subset $S$, if we define  $Q(x) = \prod_{i \in \partial S} (x + \frac{1}{i})$ then 
\[
Q'\left(\frac{-1}{q}\right) = \prod_{i \in \partial S \backslash \{ q\}} \left(  \frac{1}{i} - \frac{1}{q} \right)
\]
and using the partial fraction formula

\[
\prod_{i \in \partial S } \frac{1}{(\beta+ \frac{1}{i} )} = \sum_{q \in \partial S} \left[  \prod_{i \in \partial S \backslash \{ q\}} \left(  \frac{1}{i} - \frac{1}{q} \right)^{-1} \right] \frac{1}{\beta + \frac{1}{q}}
\]
multiplying both sides with appropriate factors we arrive at

\[
\prod_{i \in \partial S } \frac{1}{(i \beta+ 1 )} = \sum_{q \in \partial S}  \left[   \frac{q^{2L(S)-1}}{\prod_{i \in \partial S \backslash \{ q\}} (q-i)} \right] \frac{1}{q \beta + 1}
\]
where $L(S)$ denotes the number of components of $S$. As a result, we have the following representation for our infinite sum

\begin{equation}
\sum_{S\in \mathcal{A}_n} \prodiS \frac{1}{i\beta +1 }=\sum_{S \in \calA_n} \sum_{q \in \partial S} \frac{C_q(S)}{(q\beta+1)}        
\end{equation}

\begin{equation}\label{c-q-s}
    C_q(S) = \frac{q^{2L(S)-1}}{\prod_{i \in \partial S \backslash \{ q\}} (q-i)}
\end{equation}

Now to change the order of these sums we need to restrict ourselves to the bounded region $[0,M] \subset \mathbb{R}$. Let's define $\calA_{n,M}= \{ S \in \calA_n: S \subset [0,M] \}$ and write 

\begin{equation}
\sum_{S\in \mathcal{A}_n} \prodiS \frac{1}{i\beta +1 } = \lim_{M \rightarrow \infty} \sum_{S \in \calA_{n,M}} \sum_{q \in \partial S} \frac{C_q(S)}{(q \beta+1)} = \lim_{M \rightarrow \infty} \sum_{q=0}^M \sum \limits_{\substack{%
     S \in \calA_{n,M}\\
     q \in \partial S}}          \frac{C_q(S)}{(q \beta+1)} 
\end{equation}
where the last sum is taken over all subsets $S\in \mathcal{S}_M$ having $q$ as a boundary point. At this point, we claim that for any positive integer $q$ and for sufficiently large $M$

\begin{equation}\label{q-sets-sum}
 \sum \limits_{\substack{%
     S \in \mathcal{S}_M\\
     q \in \partial S}}         C_q(S) = 0   
\end{equation}
To establish this cancellation we will need to develop some geometric machinery in what follows.   

\subsection{Integer Subset Diagrams}
An integer subset diagram with distinguished boundary point $q$, consists of the following information: 
\begin{enumerate}
    \item A horizontal axis representing integers in $[0,M]$. 
    \item A vertical axis representing special boundary integer $q$.
    \item Horizontal line segments representing a subset $S$ of $[0,M]$ with integer boundary points and $q \in \partial S$. 
    
\end{enumerate}

\begin{figure}[ht]
  \centering
  \begin{tikzpicture}
  
  \tiny
  
   \foreach \x in {0,1,...,4,5}
        \node[draw,circle,inner sep=0.5pt,fill] at (\x,5) [label=above:$\x$] {};
  
    \draw (0,5) -- (5,5);
    
    \draw (5.8,5) -- (5.85,5);
    \draw (6,5) -- (6.05,5);
    \draw (6.2,5) -- (6.25,5);
   
    \node[draw,circle,inner sep=0.5pt,fill] at (7,5) [label=above:$ M-2$] {};
    \node[draw,circle,inner sep=0.5pt,fill] at (8,5) [label=above:$M-1$] {};
    \node[draw,circle,inner sep=0.5pt,fill] at (9,5) [label=above:$M$] {};
    \draw (7,5) -- (9,5);
    
    \node[draw,circle,inner sep=0.5pt,fill] at (2,5.5) [label=above:$q$] {};
    \draw[thick] (2,2) -- (2,5.5)  {};
    
    \node at (-1.5,4) {$S= [1,2] \cup [3,4]$};
    \draw (1,4) -- (2,4); 
    \node[draw,circle,inner sep=1pt,fill] at (1,4)  {};
    \node[draw,circle,inner sep=1pt,fill] at (2,4)  {};
    \draw (3,4) -- (4,4); 
    \node[draw,circle,inner sep=1pt,fill] at (3,4)  {};
    \node[draw,circle,inner sep=1pt,fill] at (4,4)  {};
    
    \node at (-1.5,3) {$S'= [2,5] \cup [M-1,M]$};
    \draw (2,3) -- (5,3); 
    \node[draw,circle,inner sep=1pt,fill] at (2,3)  {};
    \node[draw,circle,inner sep=1pt,fill] at (5,3)  {};
    \draw (8,3) -- (9,3); 
    \node[draw,circle,inner sep=1pt,fill] at (8,3)  {};
    \node[draw,circle,inner sep=1pt,fill] at (9,3)  {};

  \end{tikzpicture}
  \caption{The diagram for the boundary point $q=2$  with two integer subsets of $[0,M]$. }
\end{figure}
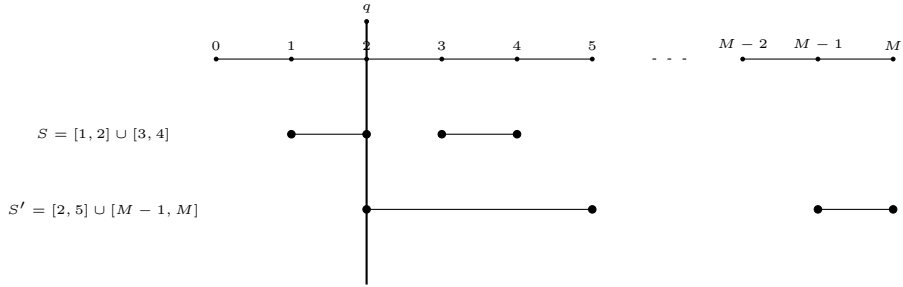

In general, any integer point exactly belongs to one of these three classes relative to $S$: 
\begin{itemize}
     \item Boundary point: $x \in \partial S$
    \item Internal point: $x \in S^\circ$
    \item External point: $x \in (S^c)^\circ$
\end{itemize}

\begin{definition}
The $d$-reflection of the set $S$ with respect to $q$, denoted by $R_d(S)$, is defined to be the symmetric reflection with respect to $q$ when the reflection only applies to the points inside the window $[q-d,q+d]$. In more precise terms, $R_d(S)$ is the closure of the following set 
\[
\{  q + a : a\in [-d,d] , \, q-a \in S \} \cup
 \{  q + a : a\in \mathbb{R} , \, |a| > d, \, q + a \in S \}    
\]
\end{definition}

The point of this definition is to come up with the appropriate values for $d$ such that the contributions from $S$ and $R_d(S)$ cancels in the sum (\ref{q-sets-sum}), i.e. 

\[
C_q(S) = -C_q(R_d(S))
\]
We use values for $d$ is such a way that $R_d(S)$ stays inside $[0,M]$ for any $S\subset [0,M]$. 

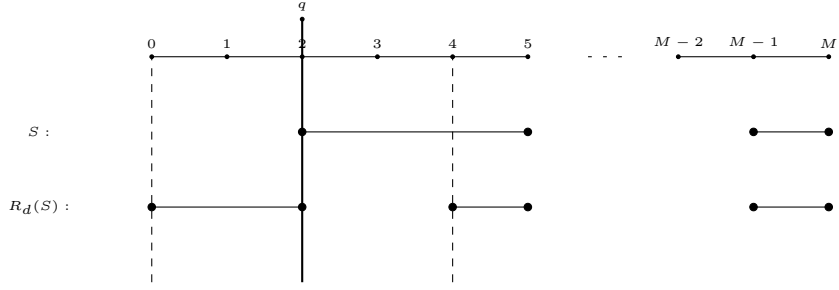
\begin{figure}[ht]
  \centering
  \begin{tikzpicture}
  
  \tiny
  
   \foreach \x in {0,1,...,4,5}
        \node[draw,circle,inner sep=0.5pt,fill] at (\x,5) [label=above:$\x$] {};
  
    \draw (0,5) -- (5,5);
    
    \draw (5.8,5) -- (5.85,5);
    \draw (6,5) -- (6.05,5);
    \draw (6.2,5) -- (6.25,5);
   
    \node[draw,circle,inner sep=0.5pt,fill] at (7,5) [label=above:$ M-2$] {};
    \node[draw,circle,inner sep=0.5pt,fill] at (8,5) [label=above:$M-1$] {};
    \node[draw,circle,inner sep=0.5pt,fill] at (9,5) [label=above:$M$] {};
    \draw (7,5) -- (9,5);
    
    \node[draw,circle,inner sep=0.5pt,fill] at (2,5.5) [label=above:$q$] {};
    \draw[thick] (2,2) -- (2,5.5)  {};
    
    \draw[dashed] (0,2) -- (0,5)  {};
    \draw[dashed] (4,2) -- (4,5)  {};

    \node at (-1.5,4) {$S:$};
    \draw (2,4) -- (5,4); 
    \node[draw,circle,inner sep=1pt,fill] at (2,4)  {};
    \node[draw,circle,inner sep=1pt,fill] at (5,4)  {};
    \draw (8,4) -- (9,4); 
    \node[draw,circle,inner sep=1pt,fill] at (8,4)  {};
    \node[draw,circle,inner sep=1pt,fill] at (9,4)  {};
    
    \node at (-1.5,3) {$R_d(S):$};
    \draw (0,3) -- (2,3); 
    \node[draw,circle,inner sep=1pt,fill] at (0,3)  {};
    \node[draw,circle,inner sep=1pt,fill] at (2,3)  {};
    \draw (4,3) -- (5,3); 
    \node[draw,circle,inner sep=1pt,fill] at (4,3)  {};
    \node[draw,circle,inner sep=1pt,fill] at (5,3)  {};
    \draw (8,3) -- (9,3); 
    \node[draw,circle,inner sep=1pt,fill] at (8,3)  {};
    \node[draw,circle,inner sep=1pt,fill] at (9,3)  {};

  \end{tikzpicture}
  \caption{An example of a $d$-reflection in diagram with $d=2$.}
\end{figure}

Note, from this example, a $d$-reflection can possibly change the number of components. However, this doesn't happen if both $q+d$ and $q-d$ are external points of $S \cup R_d(S)$ and a  $d$-reflection will preserve the number of components in that case. 

\subsection{Pair Cancellation}
A key property of the $d$-reflection is that since $q$ is a boundary point, $S$ cannot be symmetric in a small neighborhood of $q$ and hence $R_d(S) \ne S$. This fact along with the identity $R_d(R_d(S))=S$ implies that integer subsets in $\mathcal{S}_m$ come in pairs with respect to any $d$-reflection.  We want to show that the contribution to the sum vanishes in each pair, for specific values of $d$.

\begin{lemma} \label{gapreflectionlemma}
If both $S$ and $R_d(S)$ have the same number of components then 
\[
C_q(R_d(S)) = -C_q(S)
\]
and hence the corresponding pair of terms in (\ref{q-sets-sum}) cancels. 
\end{lemma}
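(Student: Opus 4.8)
The plan is to reduce the statement to a single sign computation. By hypothesis $L(S)=L(R_d(S))=:L$, so the two coefficients $C_q(S)$ and $C_q(R_d(S))$ share the same numerator $q^{2L-1}$ in \eqref{c-q-s}, and the desired identity $C_q(R_d(S))=-C_q(S)$ is equivalent to the purely multiplicative claim
\[
\prod_{i\in\partial R_d(S)\setminus\{q\}}(q-i)=-\prod_{i\in\partial S\setminus\{q\}}(q-i).
\]
Thus everything hinges on how the $d$-reflection rearranges the boundary points of $S$ other than $q$.

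First I would record that action. Splitting $\partial S\setminus\{q\}=O\sqcup I$ with $O=\{\,i:\ |q-i|>d\,\}$ the boundary points outside the window and $I=\{\,i:\ 0<|q-i|<d\,\}$ those strictly inside it, the definition of $R_d$ keeps every $i\in O$ fixed and sends each $i\in I$ to its mirror image $2q-i$. Assuming for the moment that this is a bijection onto $\partial R_d(S)\setminus\{q\}$ (that no boundary point is gained or lost at the edges $q\pm d$), the factors indexed by $O$ agree on both sides while each $i\in I$ contributes $q-(2q-i)=-(q-i)$, giving
\[
\frac{\prod_{i\in\partial R_d(S)\setminus\{q\}}(q-i)}{\prod_{i\in\partial S\setminus\{q\}}(q-i)}=\prod_{i\in I}\frac{-(q-i)}{q-i}=(-1)^{|I|}.
\]
So the lemma comes down to showing that $|I|$, the number of boundary points of $S$ strictly inside the window, is \emph{odd}.

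To see this I would argue by a discrete intermediate value (transition counting) principle. Reading the indicator $\mathbf{1}_S$ from $q-d$ to $q+d$, the number of boundary points of $S$ in the open window equals the number of times $\mathbf{1}_S$ changes value, which is even exactly when the edges $q-d$ and $q+d$ lie on the same side of $S$ (both internal, or both external). The bridge to the hypothesis is a short local computation at the two edges: choosing $d$ so that $q\pm d$ are not themselves boundary points, one finds that $R_d(S)$ acquires a new boundary point at $q+d$ precisely when $\mathbf{1}_S(q-d)\neq\mathbf{1}_S(q+d)$, and symmetrically at $q-d$ under the same condition. Hence the edges are on opposite sides iff exactly two boundary points are created, i.e.\ iff $L(R_d(S))=L(S)+1$. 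Since we are assuming $L(R_d(S))=L(S)$, the edges must lie on the same side, the number of boundary points in the window is even, and, because $q$ itself is one of them, $|I|$ is odd. This also retroactively justifies the bijection used above. Plugging $(-1)^{|I|}=-1$ into the displayed ratio yields $C_q(R_d(S))=-C_q(S)$; as $R_d$ is a fixed-point-free involution on these configurations, $S$ and $R_d(S)$ form a genuine pair and their contributions cancel in \eqref{q-sets-sum}.

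The step I expect to be the real obstacle is the edge bookkeeping in the last paragraph: carefully verifying that ``equal number of components'' is equivalent to the two window edges lying on the same side of $S$, and handling the degenerate possibility that $q\pm d$ is an integer that happens to be a boundary point. This is the one place where the geometry of the reflection, rather than formal manipulation, does the work, and it is exactly what simultaneously guarantees the clean bijection on boundary points and pins down the crucial parity of $|I|$.
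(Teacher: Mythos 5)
Your proof is correct and takes essentially the same route as the paper's: the numerator $q^{2L-1}$ is unchanged by hypothesis, factors $(q-i)$ outside the window are fixed while those inside flip sign, and a parity count of boundary points shows the net sign change is $-1$. Your transition-counting argument in fact makes rigorous what the paper's proof dismisses as ``easy to see geometrically,'' and the degenerate case you flag (where $q\pm d$ is itself a boundary point of $S$) is likewise left untreated in the paper.
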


\begin{proof}
By assumption the numerator $q^{2L(S)-1}$ in
\[
 C_q(S) = \frac{q^{2L(S)-1}}{\prod_{i \in \partial S \backslash \{ q\}} (q-i)}
\]

is the same for both $S$ and $R_d(S)$. On the other hand, a $d$-reflection will not change the magnitude of the factors $\frac{1}{q-i}$ and hence $|C_q(R_d(S))| = |C_q(S)|$. To get the sign we need to check if the number of points in $\partial S$ on the right-hand side of $q$ is odd or even. Given that $q\in \partial S$, it is easy to see geometrically that the sign changes after a $d$-reflection.

\end{proof}

For the case $q < \frac{M}{2}$, we use $d=q$ to get the following result,
\begin{lemma}
For any $q < \frac{M}{2}$ we have 
\[
 \sum \limits_{\substack{%
     S \in \mathcal{S}_M\\
     q \in \partial S}}         C_q(S) = 0   
\]
\end{lemma}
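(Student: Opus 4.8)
The plan is to prove the vanishing by exhibiting the reflection $R_q$ (the case $d=q$) as a fixed-point-free, sign-reversing involution on the index set $\{S : q\in\partial S\}$ of the sum, so that the terms cancel in pairs. The hypothesis $q<\tfrac{M}{2}$ enters exactly here: it guarantees that the reflection window $[q-d,q+d]=[0,2q]$ is contained in $[0,M]$, so $R_q(S)$ is again a subset of $[0,M]$ with $q$ still a boundary point. It is convenient to picture $R_q$ as reversing the order of the unit segments $[0,1],\dots,[2q-1,2q]$ while fixing everything to the right of $2q$. In these terms $R_q$ is visibly an involution, and it has no fixed points on our index set: because $q\in\partial S$ the two segments adjacent to $q$ carry opposite states, so they are genuinely swapped by the reversal and $R_q(S)\neq S$, while $R_q(R_q(S))=S$. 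Thus the index set splits into two-element orbits $\{S,R_q(S)\}$.

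The crux is to show $C_q(R_q(S))=-C_q(S)$ for every such $S$, which strengthens Lemma \ref{gapreflectionlemma} by removing its hypothesis that $S$ and $R_q(S)$ have the same number of components. I would separate this into a magnitude claim and a sign claim. For the magnitude, I would write $|C_q(S)| = q^{\,|\partial S|-1}/\prod_{i\in\partial S\setminus\{q\}}|q-i|$ and observe that the only boundary points whose membership can change under the reversal are the two window endpoints $0$ and $2q$; every interior boundary point is carried to another boundary point at the same distance from $q$, and every boundary point past $2q$ is fixed. Since $0$ and $2q$ each lie at distance exactly $q$ from $q$, each contributes a factor $q$ to the denominator precisely when it is present, and its presence also changes $|\partial S|$, hence the power $q^{\,|\partial S|-1}$, by the compensating amount. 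Factoring the contributions of $0$ and $2q$ out, these powers of $q$ cancel identically, and $|C_q(S)|$ is seen to depend only on the boundary points other than $0,q,2q$, which $R_q$ permutes isometrically; hence $|C_q(R_q(S))|=|C_q(S)|$ whether or not the component count changes.

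For the sign, note $\operatorname{sgn} C_q(S)=(-1)^{r(S)}$ with $r(S)$ the number of boundary points strictly greater than $q$, so it suffices to show $r(S)$ and $r(R_q(S))$ have opposite parity. I would encode $S$ by the states $s_0,\dots,s_{M-1}$ of its unit segments and use the elementary telescoping-parity fact that the number of state changes across a block of segments is congruent modulo $2$ to the indicator that its two endpoint states differ. Applying this on the blocks $[0,q]$ and $[q,2q]$ and using the edge relations produced by the reversal, the parity difference $r(R_q(S))-r(S)$ collapses to $1+\big(\mathbf 1[s_0\neq s_q]+\mathbf 1[s_q\neq s_{2q}]+\mathbf 1[s_0\neq s_{2q}]\big)$; the parenthesized sum is even, since among any three bits the number of unequal pairs is $0$ or $2$. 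Hence the parity difference is odd and the sign flips.

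Putting the two claims together gives $C_q(S)+C_q(R_q(S))=0$ on every orbit, so the whole sum vanishes. I expect the genuine difficulty to be exactly the configurations in which $R_q$ changes the number of components, where Lemma \ref{gapreflectionlemma} says nothing; the way past it is the bookkeeping in the magnitude step, which shows that the endpoint factors $q$ and the power $q^{\,|\partial S|-1}$ absorb one another, after which both magnitude and sign behave uniformly across all cases.
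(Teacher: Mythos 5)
Your proposal is correct, and at the top level it is the paper's strategy: pair the sets having $q$ as a boundary point via the $d=q$ reflection and show that this pairing is a fixed-point-free, sign-reversing involution. Where you genuinely differ is in how the cancellation $C_q(R_q(S))=-C_q(S)$ is proved. The paper argues by cases: when $L(R_q(S))=L(S)$ it invokes Lemma \ref{gapreflectionlemma}, and when the component count changes it computes directly, asserting that $0$ and $2q$ are the only boundary points of $R_q(S)$ absent from $\partial S$, so that the two new denominator factors $(q-0)(q-2q)=-q^2$ are absorbed by the jump of the numerator from $q^{2L(S)-1}$ to $q^{2L(S)+1}$. You instead prove the hypothesis-free statement uniformly, splitting it into a magnitude claim (the factors contributed by $0$ and $2q$ cancel against the power $q^{|\partial S|-1}$, leaving a quantity depending only on the boundary points other than $0,q,2q$, which $R_q$ permutes preserving distance to $q$) and a parity claim for the sign. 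This buys several things. First, no bookkeeping of component counts: your argument subsumes both of the paper's cases at once. Second, it is more watertight than the paper's written argument: the assertion that $0$ and $2q$ are the \emph{only} new boundary points is literally false whenever $\partial S\cap(0,2q)$ is not symmetric about $q$ (interior boundary points are mirrored to new positions, e.g. $S=[1,2]\cup[3,8]$ with $q=3$); the paper's final formula still comes out right because in the component-changing case the number of mirrored boundary points other than $q$ is forced to be even, but that parity fact is left implicit there, and it is exactly what your telescoping-parity step supplies explicitly. Third, your segment-reversal model of $R_q$ keeps the involution inside the index set: under the paper's closure definition, $R_q$ can create a degenerate one-point component (e.g. $S=[2,3]\cup[4,6]$ with $q=2$ yields an isolated point at $0$), on which $C_q$ is ill-defined and $R_q$ fails even to be an involution, whereas reversing unit-segment states discards such phantom points automatically. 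What the paper's route buys in exchange is brevity: granting Lemma \ref{gapreflectionlemma}, only the exceptional case requires any computation.
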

\begin{proof}
We have a pairing of subsets in $\mathcal{S}_M$ given by the $q$-reflections operator. Using lemma \ref{gapreflectionlemma} we only have to prove the cancellation when the number of components is altered after a $q$-reflection. From the diagram point of view, this can only happens if $2q$ is an internal point for $S \cup R_q(S)$. Without loss of generality we only consider the case when $2q$ is an internal point for $S$ such that a $q$-reflection adds one extra component, in other words $L(R_q(S)) = L(S) + 1 $.  For such $S$, we have
\[
C_q(R_q(S)) = \frac{q^{2L(R_q(S))-1}}{\prod_{i \in \partial R_q(S) \backslash \{ q\}} (q-i)} 
\]

Rewriting in terms of $S$ and using the fact that $0$ and $2q$ are the only boundary points in $R_q(S)$ absent in $S$, we get

\[
C_q(R_q(S)) = \frac{1}{(-q)(q)} \frac{q^{2L(S)+1}}{\prod_{i \in \partial S \backslash \{ q\}} (q-i)} = \frac{-q^{2L(S)-1}}{\prod_{i \in \partial S \backslash \{ q\}} (q-i)}
\]
which proves $C_q(R_q(S)) = -C_q(S)$.
\end{proof}

The second type of cancellation happens for $n \leq q \leq M-n$. To find the appropriate $d$ we need the following definition, 

\begin{definition}
The \textbf{Minimum Symmetric Gap} of $S$ at $q$ is the integer defined by 
\[
g(S) = \inf \{d>0 : q+d \notin S\cup R_\infty(S)\} 
\]
\end{definition}

\begin{lemma}
For any $q$, satisfying $n \leq q \leq M-n$ we have 
\[
 \sum \limits_{\substack{%
     S \in \mathcal{S}_M\\
     q \in \partial S}}         C_q(S) = 0   
\]
\end{lemma}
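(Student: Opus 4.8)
The plan is to pair each admissible subset $S$ with its reflection $R_{g(S)}(S)$, where $d=g(S)$ is the minimum symmetric gap, and to show that this pairing is a fixed-point-free involution under which the terms $C_q(S)$ cancel two at a time. Since $q\in\partial S$, the set $S$ is not symmetric in a small neighborhood of $q$, so $R_{g(S)}(S)\neq S$ and no subset is paired with itself. Thus it suffices to verify, for each $S$ with $q\in\partial S$, that (i) $R_{g(S)}$ maps $\mathcal{S}_M$ to itself and is an involution fixing the gap, and (ii) $R_{g(S)}$ preserves the number of components, so that Lemma \ref{gapreflectionlemma} applies and delivers $C_q(R_{g(S)}(S))=-C_q(S)$. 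Summing over the resulting decomposition into two-element orbits then forces the sum in (\ref{q-sets-sum}) to vanish.

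First I would control the size of the gap and confirm that the reflection stays inside $[0,M]$. Writing $g=g(S)$, the definition of the minimum symmetric gap says that for every $d$ with $0<d<g$ at least one of $q\pm d$ lies in $S$; a direct measure count therefore gives $|S\cap(q-g,q+g)|\ge g$, and since $|S|=n$ this forces $g\le n$. Under the hypothesis $n\le q\le M-n$ we then have $q-g\ge q-n\ge 0$ and $q+g\le q+n\le M$, so the reflection window $[q-g,q+g]$ is contained in $[0,M]$, and because $R_g$ acts as an isometry on this window it preserves total length; hence $R_g(S)$ is again an element of $\mathcal{S}_M$. Next I would establish the involution property: a short unwinding of the definitions shows $R_g(S)\cup R_\infty(R_g(S))=S\cup R_\infty(S)$, so the symmetrization is unchanged and therefore $g(R_g(S))=g(S)$; combined with $R_g(R_g(S))=S$ this makes the pairing consistent, so each orbit is genuinely a pair reflected by one and the same window.

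The crux is to show that $R_g$ preserves the number of components, which is what lets Lemma \ref{gapreflectionlemma} produce the sign-reversing cancellation. By definition of the gap, $q+g\notin S\cup R_\infty(S)$, i.e.\ both $q+g\notin S$ and $q-g\notin S$; moreover neither $q+g$ nor $q-g$ can be a boundary point of $S$, since every point of $\partial S$ lies in the closed set $S$. Hence $q\pm g$ are external points of $S$, and, examining $R_g$ on either side of the window edge, they remain external points of $S\cup R_g(S)$. By the remark following the definition of the $d$-reflection, $R_g$ then preserves the number of components, so Lemma \ref{gapreflectionlemma} applies to each pair and the orbit-by-orbit cancellation completes the proof. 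I expect the main obstacle to be exactly this component-preservation step: one must argue, from the minimality of the symmetric gap, that $q\pm g$ are genuinely external (and not merely non-boundary) points of the union $S\cup R_g(S)$, since it is precisely this local externality at the window edges that prevents the reflection from splitting or merging a component and hence altering the exponent $2L(S)-1$ of $q$ inside $C_q$.
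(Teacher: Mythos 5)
Your overall strategy is exactly the paper's: pair $S$ with $R_{g(S)}(S)$, check $g(S)\le n$ so the reflection window stays inside $[0,M]$, argue that the reflection preserves the number of components, and invoke Lemma \ref{gapreflectionlemma}. Your two preliminary points --- the measure count giving $g(S)\le n$, and the invariance of the symmetrization $S\cup R_\infty(S)$ under $R_{g(S)}$, which makes the pairing a consistent, fixed-point-free involution --- are correct, and they supply details the paper leaves implicit.

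However, your crux step rests on a false reading of the definition of $g(S)$. The set $\{d>0: q+d\notin S\cup R_\infty(S)\}$ is open, because $S\cup R_\infty(S)$ is a finite union of closed intervals; hence its infimum is \emph{not} attained, and $q\pm g$ are precisely the endpoints of the connected component of the symmetrization containing $q$. So $q+g\in S\cup R_\infty(S)$, contrary to your claim, and in fact at least one of $q\pm g$ must be a boundary point of $S$: if $q+g\in S$ it is a right endpoint of a component of $S$, and otherwise $q-g\in S$ and it is a left endpoint. Concretely, take $S=[q,q+1]\cup[q+2,q+3]$ with $n=2$: then $g(S)=1$ and $q+g=q+1\in\partial S$, so $q\pm g$ are not external points of $S$; moreover the hypothesis of the remark you invoke (both window-edge points external to $S\cup R_d(S)$) also fails here, even though the number of components is in fact preserved and the two terms do cancel. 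So that route cannot be repaired simply by weakening your claim --- the remark's sufficient condition genuinely does not hold in this situation. The correct argument uses what minimality of $g$ actually provides: the symmetrization contains all of $[q-g,q+g]$ but no points of $(q+g,q+g+\delta)$ or $(q-g-\delta,q-g)$ for some $\delta>0$; since all endpoints are integers, $S$ has no points in $(q+g,q+g+1)\cup(q-g-1,q-g)$, so no component of $S$ touches or crosses the window boundary from outside. Every component of $S$ therefore lies either entirely inside the closed window $[q-g,q+g]$ or at distance at least $1$ from it, and reflecting the contents of the window is then a bijection on components, giving $L(R_{g}(S))=L(S)$. With that substitution your proof is complete and coincides with the paper's (much terser) argument.
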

\begin{proof}
We pair any subset $S$ appearing in the sum with the subset $R_d(S)$ when $d=g(S)$. It is easy to see the $|S|=n$ implies $g(S) \leq n$. Having this symmetric gap will ensure that the number of components is preserved after a $d$-reflection and the proof follows from  lemma \ref{gapreflectionlemma}.
\end{proof}

To conclude, for any given integer subset one of the above cancellation statements applies and hence we only have non-zero contribution from the boundary points close to $M$. As $M \rightarrow \infty$ the analysis of poles leads to the observation that the resulting limit is a polynomial of degree $n$ in $\beta^{-1}$ (but this does not give us an explicit formula at this point), finishing the proof of theorem \ref{thm:poles}.

 \section{Lattice Models}\label{sec:lattice}

The super telescoping formula  (\ref{eq:super_tel}) can be used to find explicit expressions for  partition functions of certain lattice models. Consider the vector valued lattice model on $\mathbb{Z}^+=\{ 0,1,2,\dots\}$ where the lattice configuration at any site $i\in \mathbb{Z}^+$ is an infinite binary vector $\Vec{\sigma}(i) = (\sigma_1 (i), \sigma_2 (i), \dots)$, with $\sigma_k (i) \in \{0,1 \}$ being non-zero only in finitely many components. The temperature-dependent Hamiltonian of the model is of the form

   \begin{equation} \label{eq:vec_hamiltonian}
               H(\Vec{\sigma}) = -\sum_{i} J_i(\beta) \; \nabla \Vec{\sigma}(i) \cdot \nabla \Vec{\sigma}(i) - \sum_{i} \Vec{\mu} \cdot \Vec{\sigma}(i)
   \end{equation}
    
\noindent    
where $\Vec{\mu} = (\mu_1, \mu_2, \dots)$ is an infinite vector of reals and the $k$'th component of $\nabla \Vec{\sigma}(i)$ is defined as discrete differentiation operator on a lattice using

$$
\partial_k \sigma(i) = \sigma_k(i+1) - \sigma_k(i), \quad \quad \nabla \Vec{\sigma} = (\partial_1 \sigma, \partial_2 \sigma, \dots)
$$

Our choice of interaction rate $J_i(\beta)$ becomes independent of $\beta$ as $\beta \rightarrow 0$, with the limiting rate given by $J_i=-i$. We will prove the following result about the partition function $Z = \sum_{\Vec{\sigma}} e^{-\beta H(\Vec{\sigma})}$ for this model.

\begin{theorem} \label{thm:main}
For the lattice model on $\mathbb{Z}^+$ with the Hamiltonian given by (\ref{eq:vec_hamiltonian}), let $J_i(\beta) = \frac{-ln(1+i\beta)}{\beta}$ and $\mu_k = -ln(p_k)$ with $p_k$ being the $k$'th prime number. The  partition function is then given in terms of the Riemann zeta function by
$$
Z =\sum_{\Vec{\sigma}} e^{-\beta H(\Vec{\sigma}) }= \zeta^{\frac{1}{\beta}}(\beta)
$$
 
 \end{theorem}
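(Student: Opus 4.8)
The plan is to show that the Boltzmann weight factorizes over the components of the spin vector, turning $Z$ into an infinite product of super-telescoping sums, one for each prime. First I would fix a dictionary between a configuration $\vec\sigma$ and a sequence $(S_k)_{k\ge 1}$ of integer subsets in $\calA$. Since each component $\sigma_k(\cdot)$ is a finitely supported $\{0,1\}$-valued sequence on $\zplus$, its maximal runs of $1$'s single out a unique $S_k=[a_1,b_1]\cup\dots\cup[a_{L},b_{L}]\in\calA$; I would normalize so that $\sigma_k(i)=1$ exactly on the cells $a_j< i\le b_j$. With this convention two bookkeeping identities hold: the up-spin count $\sum_i\sigma_k(i)$ equals the total length $|S_k|$, and the discrete gradient $\partial_k\sigma(i)$ is nonzero precisely when $i\in\partial S_k$, so that $(\partial_k\sigma(i))^2$ is the indicator of $i\in\partial S_k$. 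The assignment $\vec\sigma\mapsto(S_k)_k$ is then a bijection onto sequences in $\calA$ with all but finitely many $S_k$ empty.

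Next I would evaluate $-\beta H$ and exponentiate. The gradient term becomes $-\beta\sum_i J_i(\beta)\,(\nabla\vec\sigma\cdot\nabla\vec\sigma)(i)$, and the choice $J_i(\beta)=-\ln(1+i\beta)/\beta$ turns $-\beta J_i(\beta)$ into $\ln(1+i\beta)$; summing the indicators $(\partial_k\sigma(i))^2$ against this weight and exponentiating yields the factor $\prod_k\prod_{q\in\partial S_k}(q\beta+1)^{-1}$, which is exactly $\prod_k w_\beta(S_k)$ for $w_\beta(S)=\prodiS\frac{1}{i\beta+1}$. The field term $\beta\sum_i\vec\mu\cdot\vec\sigma(i)$ with $\mu_k=-\ln p_k$ contributes $\prod_k(p_k^{-\beta})^{|S_k|}$ after invoking $\sum_i\sigma_k(i)=|S_k|$. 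Collecting both pieces gives
\[
e^{-\beta H(\vec\sigma)}=\prod_{k\ge 1}\left[\left(\prod_{q\in\partial S_k}\frac{1}{q\beta+1}\right)\bigl(p_k^{-\beta}\bigr)^{|S_k|}\right].
\]

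Finally, summing over all configurations amounts to summing each $S_k$ independently over $\calA$, so (for $\mathrm{Re}\,\beta>1$, where $|p_k^{-\beta}|<1$ and all series and products converge) the partition function factorizes as $Z=\prod_{k\ge1}\sum_{S_k\in\calA}\bigl(\prod_{q\in\partial S_k}\frac{1}{q\beta+1}\bigr)(p_k^{-\beta})^{|S_k|}$. Applying the super telescoping formula (Theorem \ref{thm:supertel}) with $x=p_k^{-\beta}$ evaluates the $k$-th factor to $(1-p_k^{-\beta})^{-1/\beta}$, and pulling the exponent out of the Euler product $\zeta(\beta)=\prod_k(1-p_k^{-\beta})^{-1}$ gives $Z=\zeta(\beta)^{1/\beta}$. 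I expect the main obstacle to be neither of the algebraic steps but the careful bookkeeping of the dictionary — pinning down the off-by-one relation among up-spins, domain walls, and total length so that the weight matches $w_\beta$ exactly — together with justifying the interchange of the infinite sum over configurations and the infinite product over components, for which the finite support of each $\vec\sigma$ and the convergence of the Euler product in the region $\mathrm{Re}\,\beta>1$ are the key inputs.
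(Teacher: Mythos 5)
Your proof is correct and follows essentially the same route as the paper: factor the Boltzmann weight over the independent components labeled by primes, apply the super telescoping formula (Theorem \ref{thm:supertel}) with $x = p_k^{-\beta}$ to each scalar subsystem, and assemble the factors $(1-p_k^{-\beta})^{-1/\beta}$ via the Euler product into $\zeta(\beta)^{1/\beta}$. Your explicit spin--subset dictionary (with the off-by-one normalization $\sigma_k(i)=1$ for $a_j < i \le b_j$) and the convergence remark for $\mathrm{Re}\,\beta>1$ make precise bookkeeping that the paper leaves implicit, but the argument is the same.
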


 Looking component-wise at the above vectorial model and writing the Hamiltonian as a sum over $k$, we realize that the corresponding lattice system is made of an infinite family of independent sub-systems labeled by $k$. Each component corresponds to a binary lattice model on $\mathbb{Z}^+$ with the Hamiltonian
   $$
         H(\sigma) = - { \sum_i J_i(\beta) ( \partial \sigma(i))^2} - { \mu \sum_i \sigma(i)} 
    $$
Using the super telescoping formula, we show that this scalar model has a surprisingly simple partition function 

$$
Z = \frac{1}{(1-e^{\beta \mu})^{1/\beta}} 
$$


\subsection{Standard Ising Model}

A function $\sigma(i)$ on a lattice domain with values in $\{ 0,1 \} $ is usually called a lattice configuration. For a one dimensional lattice on integers define the discrete differentiation by

$$
\partial \sigma(i) = \sigma(i+1) - \sigma(i)
$$
The Hamiltonian 

\begin{equation}\label{eq:energy-ising-fluid}
    H(\sigma) = -J \sum_i ( \partial \sigma(i) )^2  -\mu \sum_i  \sigma(i) 
\end{equation}

\noindent
defines a probability distribution on the space of lattice configurations with the density given by 

$$
\frac{e^{-\beta H(\sigma)}}{Z} 
$$

\noindent
with $\beta$ being the inverse temperature and the partition function $Z = \sum_{\sigma} e^{-\beta H(\sigma)}$. The binary valued lattice model described above is closely related to the standard $\{-1,+1\}$ valued Ising model and for instance applies to fluid systems, where $\sigma(i)$ represent the presence or absence of a particle in a unit region of space. In the standard Ising model with $\sigma(i) \in \{ -1,+1\}$, the Hamiltonian is given by 
$$
H(\sigma)=-\sum_{\langle i j\rangle} J_{i j} \sigma(i) \sigma(j)-\mu \sum_{j} h_{j} \sigma(j)
$$

\noindent
it is not hard to see that the two models are equivalent after a shift in energy.

\subsection{Lattice Model on $\mathbb{Z}^+$}

In this subsection, we will show that certain lattice model in one dimension is related to the super telescoping formula.  The Hamiltonian of the model is given by

\begin{equation}
    H(\sigma) =  - \sum_i J_i(\beta) \; ( \partial \sigma(i))^2 - \mu \sum_i \sigma(i)
\end{equation}

$$
J_i(\beta) = -\frac{\ln(1+i\beta)}{\beta}\; , \quad i\in  \zplus
$$

For this model to be well-defined, $\sigma(i)\neq 0$ only for finitely many $i$'s. To arrive at our desired formula, notice that 

\begin{align}
            e^{-\beta H(\sigma)} &= \left( \prod_i e^{\beta J_i(\beta) |\partial \sigma(i)|}\right) e^{\beta \mu \sum_i |\sigma(i)| } \\
    &= \left( \prod_i \left( \frac{1}{1 + i\beta} \right)^{ |\partial \sigma(i) |} \right) e^{\beta \mu \sum_i |\sigma(i)| }
\end{align}
Here, we used the fact that $e^{\beta J_i(\beta)} = \frac{1}{1+i\beta}$. In terms of subset notation, introducing $x = e^{\beta \mu}$ the above expression is equal to

    $$  \left( \prodiS \frac{1}{1 + i\beta} \right) x^{|S|}$$

\noindent
and hence the partition function of our lattice model is given by 

$$
Z = \frac{1}{(1-x)^{1/\beta}} = \frac{1}{(1-e^{\beta \mu})^{1/\beta}} 
$$
\noindent




    
%

\subsection{Vectorial lattice model and Riemann zeta function}\label{sec:number-theory}

Historically motivated by the Hilbert-Polya conjecture, efforts have been made to construct field theories establishing correspondences between number theory and ideas in physics. Among those is the idea of primon gas discovered independently by Donald Spector \cite{Spector:1988nn} and Bernard Julia \cite{Julia:1990}. Some statistical physics models have also been constructed, for instance in \cite{Knauf:1992mi} (see references for a list of related works). 

The content of this subsection gives a proof for Theorem \ref{thm:main}. Here, we consider the lattice model given by the Hamiltonian in equation (\ref{eq:vec_hamiltonian}). Intuitively, our model corresponds to a lattice system such that each site can be occupied with a particle species labeled by a prime, with no interaction between distinct species. In the previous section, we used (\ref{eq:super_tel}) with  $x=e^{\beta \mu}$. Now, for the $k$th prime number $p_k$, we are going to assume $\mu_k =- \ln(p_k)$ which leads to $x=p_k^{-\beta}$. Formula (\ref{eq:super_tel}) then implies

\begin{equation}\label{eq:prime_tel_sum}
     \sum_{S\in \calA} \left( \prodiS \frac{1}{i\beta +1 } \right) p_k^{-\beta|S|} =\sum_{\sigma_k} e^{-\beta H(\sigma_k)} = \frac{1}{(1-p_k^{-\beta})^{1/\beta}}
\end{equation}

\noindent
we now consider an infinite family of independent lattice system, one for each prime number $p_k$. The partition function is the product of the partition function for each individual subsystem and using the Euler product formula we observe that 

\begin{equation}\label{eq:prime_prod}
  \prod_{k} \left( \sum_{\sigma_k} e^{-\beta H(\sigma_k)} \right) =\zeta^{1/\beta}(\beta)
\end{equation}
\noindent



It follows from our construction that the associated Hamiltonian is given by

\begin{equation}\label{eq:energy-ising-fluid}
               H(\Vec{\sigma}) = \sum_k \left[ -\sum_{i} J_{i}(\beta) \; (\partial \sigma_k(i))^2 - \sum_{i} \mu_k \sigma_k(i) \right]
\end{equation}
\noindent
in this model $\sigma_k(i) \in \{0,1 \}$ depending whether or not a particle of species $k$ is present at site $i$.  Let's write this in the standard form of a vector lattice model with $\Vec{\sigma}(i)$ to represent all degrees of freedom at site $i$. The energy function can then be written in a vector form 

\begin{equation}\label{eq:energy-ising-fluid}
               H(\Vec{\sigma}) = -\sum_{i}J_{i}(\beta) \; \nabla \Vec{\sigma}(i) \cdot \nabla \Vec{\sigma}(i) - \sum_{i} \Vec{\mu} \cdot \Vec{\sigma}(i)
\end{equation}
This completes the proof of theorem \ref{thm:main} and the partition function of the above model is given by $\zeta^{1/\beta}(\beta)$. We note that there are  Lee-Yang type theorems for the zeros of partition functions in terms of inverse temperature. These types of theorems can be investigated for the limiting lattice system as $\beta \rightarrow 0$.

\begin{remark}
From an alternative point of view, a lattice configuration $\Vec{\sigma}$ also defines a random function
 $$f:\mathbb{Z}^+\longrightarrow \mathbb{N}$$
$$ f(i) = \prod_k p_k^{\sigma_k(i)}$$
\noindent
hence our construction gives a way of generating random integer valued functions. 
\end{remark}


\newpage
\section*{Appendix I: Super Telescoping Formula and Verblunsky Theory}\label{sec:origin}

The following section explores the origin of the super telescoping formula including topics related to Gaussian free fields, Verblunsky coefficients and loop group factorization. It is a brief version of discussions that appeared in \cite{Latifi-Pickrell}.  The content and notation of the section is mostly independent of the rest of paper. 
\\

\textbf{Notation:} In this section, $\beta$ represent the inverse temperature parameter associated with the Gaussian free field. Complex conjugation is sometimes denoted by $*$ and $f^*(z) := \bar f (\frac{1}{\bar z}) = f(\frac{1}{z^*})^*$. We will also use the multi-index notation as follows. For a multi-index $p=(p(1), p(2), \ldots)$ we define $$p !:=\prod_{j} p(j) !$$ $$|p|:=\sum_{j} p(j)$$  $$\operatorname{deg}(p):=\sum_{j} j p(j)$$ \\

\subsection*{Verblunsky Correspondence}

The  Verblunsky correspondence associates to a probability measure on $S^1$ a sequence of complex numbers in the unit disc, $\Delta= \{z \in \mathbb{C}: |z|<1 \}$

$$\mu \leftrightarrow \{\alpha_n\} $$.

Given a probability measure $\mu$ on $S^1$, one finds a set of monic orthogonal polynomials  $\{ p_n\} $ relative to $\mu$, satisfying the recursive relation 
$$
p_{n}(z)=z p_{n-1}(z)+\alpha_{n}^{*} z^{n-1} p_{n-1}^*(z), \quad n>0
$$

with $\{\alpha_n\}$ being the unique Verblunsky sequence associated to $\mu$. Conversely, given a sequence $\{\alpha_n\}$ with $\alpha_n \in \Delta$, the measure $\mu$ is given by 

$$
d \mu=\lim _{N \rightarrow \infty} \frac{\prod_{n=1}^{N}\left(1-\left|\alpha_{n}\right|^{2}\right)}{\left|p_{N}(z)\right|^{2}} \frac{d \theta}{2 \pi}
$$

\subsection*{Intermediate Sequence}

For a continuous measure $d\mu = e^{f} \, \frac{d\theta}{2 \pi}$ with $ f(z) = \sum_n f_n z^n$, the well-known Szego formula says 

$$
\exp \left(-\sum_{n=1}^{\infty} n\left|f_{n}\right|^{2}\right)=\prod_{n=1}^{\infty}\left(1-\left|\alpha_{n}\right|^{2}\right)^{n}
$$

Any sequence $\{ f_n\} $ is then associated to a Verblunsky sequence $\{ \alpha_n \}$. To arrive at the super telescoping formula we need to introduce an intermediate sequence $\{ x_n \} $ which can be obtained from both $\{ f_n\}$ and $\{ \alpha_n \}$. In terms of $\{ f_n\}$,  lets define $\{ x_n \} $ by

$$
e^{-f}=1+\sum_{1}^{\infty} x_{n} z^{n}
$$

$$
f(z) = \sum_{n=1}^{\infty} f_n z^n
$$

Coming from the other direction, let us now construct  $\{ \hat{x}_n\}$ from the Verblunsky sequence $\{ \alpha_n \} $.  Instead of conventional orthogonal polynomials $p_n(z)$ it seems more natural for us to use the reversed polynomials $r_n(z) = z^n p_n^* (z)$, where the Szego recursion can be written as 
$$
\left(\begin{array}{c}
r_{n}^{*} \\
r_{n}
\end{array}\right)=\left(\begin{array}{cc}
1 & \alpha_{n}^{*} z^{-n} \\
\alpha_{n} z^{n} & 1
\end{array}\right)\left(\begin{array}{c}
r_{n-1}^{*} \\
r_{n-1}
\end{array}\right)=\prod_{k=1}^{n}\left(\begin{array}{cc}
1 & \alpha_{k}^{*} z^{-k} \\
\alpha_{k} z^{k} & 1
\end{array}\right)\left(\begin{array}{l}
1 \\
1
\end{array}\right)
$$

The above expression represent a loop group factorization for a loop with values in $SU(1,1)$.  The sequence $\{ \hat{x}_n\}$ can then be obtained from $\{ \alpha_n \}$ through 

\begin{equation} \label{eq:def_xhat}
    \gamma_{2}+\delta_{2}:=1+\sum_{k=1}^n \hat{x}_{k} z^{k}
\end{equation}

where 
$$
\left(\begin{array}{ll}
\delta_{2}^{*} & \gamma_{2}^{*} \\
\gamma_{2} & \delta_{2}
\end{array}\right)=: \prod_{j=1}^{k}\left(\begin{array}{cc}
1 & \alpha_{j}^{*} z^{-j} \\
\alpha_{j} z^{j} & 1
\end{array}\right)
$$

The analysis in \cite{Latifi-Pickrell} suggests that given certain distribution of $\{ f_n\} $ and $\{ \alpha_n \}$,   the two sequences $\{ x_n\}$ and $\{ \hat{x}_n\}$ should have identical moments. 

\subsection*{GFF and Super-telescoping formula}

The Gaussian free field can be described as a product measure on $\{ f_n\}$ in terms of inverse temperature parameter $\beta$ 

\begin{equation} \label{eq:eq:dnu_dist}
d\nu_{\beta} = \prod_{n=1}^{\infty} \frac{n \beta}{\pi} e^{-n \beta\left|f_{n}\right|^{2}} d \lambda\left(f_{n}\right)
\end{equation}

A reasonable question here is to find the measure on the space of Verblunsky sequences $\{\alpha_n\}$ associated with this Gaussian. The follwoing answer has been given in \cite{Chhaibi} and \cite{Latifi-Pickrell} 

\begin{equation} \label{eq:alpha_dist}
\prod_{n=1}^{\infty} \frac{n \beta}{\pi}\left(1-\left|\alpha_{n}\right|^{2}\right)^{n \beta-1} d \lambda\left(\alpha_{n}\right)
\end{equation}

The basic idea of \cite{Latifi-Pickrell} is that $\{ \hat{x}_n\}$ and $\{ x_n\}$ defined in terms of $\{\alpha_n\}$ and $\{ x_n\}$ respectively, have identical moments with respect to this measures. The rest of this section is to establish this correspondence and show that it leads to the super telescoping formula. As a complex Gaussian  random variable, $f_n$ satisfies $E\left(\left|f_{n}\right|^{2 k}\right)=\frac{k !}{(n \beta)^{k}}$ and therefore 

\begin{equation} \label{eq:fn_moments}
E\left(\prod_{n \geq 1} f_{n}^{p(n)}\left(\prod_{m \geq 1} f_{m}^{q(m)}\right)^{*}\right)=\prod_{n \geq 1} \frac{p(n) !}{(n \beta)^{p(n)}}
\end{equation}

For $f(z) = \sum_{n=1}^\infty f_n z^n$, with $\{ f_n \} $ distributed according to the measure $d\nu_{\beta}$, let $\{ x_n\} $ and $\{ y_n\} $ be random sequences defined by 

$$
e^{-f} = 1 + \sum_{n=1}^\infty x_n z^n
$$

$$
e^{f} = 1 + \sum_{n=1}^\infty y_n z^n
$$

$\{ x_n\} $ and $\{ y_n\} $ have identical distributions induced from $d\nu_{\beta}$ but $\{ y_n\} $ is more convenient to work with. To write $y_n$ in terms of $f_n$ we observe that

$$
 e^f = 1 + (\sum_1^\infty f_n z^n) + \frac{1}{2!} (\sum_1^\infty f_n z^n)^2 + \dots 
$$

which in terms of multi-indices implies  $y_{n}=\sum_{\{J: d e g(J)=n\}} \frac{1}{J !} f^{J}$. Now, using (\ref{eq:fn_moments}), the moments of $\{ y_n\} $ and hence $\{ x_n\} $, are given by

\begin{equation} \label{eq:xn_moments}
\begin{split}
     E\left(x^{p}\left(x^{q}\right)^{*}\right) &= E\left(\prod_{n \geq 0}\left(\sum_{J_{n}} \frac{1}{J_{n} !} f^{J_{n}}\right)^{p(n)}\left(\prod_{m \geq 0}\left(\sum_{K_{m}} \frac{1}{K_{m} !} f^{K_{m}}\right)^{q(m)}\right)^{*}\right) \\
    &= \sum_{J_{n, r}, K_{m, s}} \frac{1}{\left(\prod_{n, r} J_{n, r} !\right)\left(\prod_{m, s} K_{m, s} !\right)} E\left(f^{\sum_{n, r} J_{n, r}}\left(f^{\sum_{m, s} K_{m, s}}\right)^{*}\right)\\ 
    &= \sum \frac{1}{\prod_{n, r}\left(J_{n, r} !\right) \prod_{m, s}\left(K_{m, s} !\right)} \frac{\left(\sum_{n, r} J_{n, r}\right) !}{\prod_{u} u^{\left(\sum_{n, r} J_{n, r}\right)(u)}} \beta^{-\left|\sum_{n, r} J_{n, r}\right|}
\end{split}
\end{equation}

here the sum is over all multi-indices $J_{n, r}, K_{m, s}$ satisfying  the constraints 
$$ \operatorname{deg}\left(J_{n, r}\right)=n$$
$$\operatorname{deg}\left(K_{m, s}\right)=m$$
$$\sum_{n, r} J_{n, r}=\sum_{m, s} K_{m, s}$$
 and for given $n, m>0$ we have  $1 \leq r \leq p(n)$ and $ 1 \leq s \leq q(m)$. Equation in (\ref{eq:xn_moments}) can be written in a polynomial form as 
 
\begin{equation} \label{eq:xn_moments_short}
 E\left(x^{p}\left(x^{q}\right)^{*}\right) = \sum_{k=1}^{\operatorname{deg}(p)} a(p, q, k) \beta^{-k} 
\end{equation}
$$
a(p, q, k)=\sum \frac{1}{\prod_{n, r}\left(J_{n, r} !\right) \prod_{m, s}\left(K_{m, s} !\right)} \frac{\left(\sum_{n, r} J_{n, r}\right) !}{\prod_{u} u^{\left(\sum_{n, r} J_{n, r}\right)(u)}}
$$

A careful look at (\ref{eq:xn_moments}) and the computation behind it leads to the following special case for $E(x_n x_n^*)$ (see section 5.2 of \cite{Latifi-Pickrell} for a detailed argument)
 
  $$ E(x_n x_n^*)  =  \left(\begin{array}{c} \beta^{-1} + n -1 \\ n \end{array}\right)$$

Let us now start from the distribution of $\{ \alpha_n \} $ given in (\ref{eq:alpha_dist}) and compute the corresponding distribution for $\{ \hat{x}_n \} $.  Separating the angular and radial components of $\alpha_n$, for a positive integer $k$ we have

$$
E\left(\left|\alpha_{n}\right|^{2 K}\right)=2 \pi \int_{r=0}^{1} r^{2 K} \frac{n \beta}{\pi}\left(1-r^{2}\right)^{n \beta-1} r d r
$$

Using the definition of the beta function $
\mathrm{B}(x, y)=\int_{0}^{1} t^{x-1}(1-t)^{y-1} d t
$ we can conclude

\begin{equation} \label{eq:alpha_moments}
\begin{split}
     E\left(\left|\alpha_{n}\right|^{2 K}\right) &= n \beta B(K+1, n \beta)\\
     &= \frac{\Gamma(K+1) \Gamma(n \beta+1)}{\Gamma(n \beta+K+1)} \\
    &= \frac{K !}{(n \beta+1) \ldots(n \beta+K)}
\end{split}
\end{equation}

Above equation leads to the following moment relation. For multi-indices $p$ and $q$ we have 

$$
E\left(\alpha^{p}\left(\alpha^{q}\right)^{*}\right)=\prod_{n = 1}^\infty \frac{p(n) !}{(n \beta+1) \ldots(n \beta+p(n))}
$$

 Combining this with (\ref{eq:def_xhat}) one can obtain
 
 \begin{equation} \label{eq:alpha_moments}
\begin{split}
     E(\hat{x}_n \hat{x}_n^*)  &= \sum_{S\in \mathcal{A}_n} \prod_{i\in \partial S} E(|\alpha_i|^2) \\
     &= \sum_{S\in \mathcal{A}_n} \prod_{i\in \partial S} \frac{1}{i \beta +1 }
\end{split}
\end{equation}

We have shown the following: 

\begin{proposition}
The variance identity  $E(x_n x_n^*) = E(\hat{x}_n \hat{x}_n^*)$ is equivalent to the super-telescoping identity given by  \begin{equation}
     \sum_{S\in \mathcal{A}_n} \prod_{i\in \partial S} \frac{1}{i \beta +1 } = \left(\begin{array}{c} \beta^{-1} + n -1 \\ n \end{array}\right) 
\end{equation}
\end{proposition}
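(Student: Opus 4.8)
The plan is to evaluate the two quantities $E(x_n x_n^*)$ and $E(\hat x_n \hat x_n^*)$ separately in closed form, after which the asserted equivalence is nothing more than the substitution of these evaluations into the identity $E(x_n x_n^*)=E(\hat x_n \hat x_n^*)$. First I would record the Gaussian side. With $\{x_n\}$ defined by $e^{-f}=1+\sum_n x_n z^n$ and $f$ distributed according to $d\nu_\beta$ in (\ref{eq:eq:dnu_dist}), expanding the exponential writes $x_n$ as a sum of monomials $f^J$ with $\deg(J)=n$, and the Gaussian moment formula (\ref{eq:fn_moments}) converts each matched term into an explicit power of $\beta^{-1}$. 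Carrying this through (\ref{eq:xn_moments})--(\ref{eq:xn_moments_short}) and specializing to the diagonal multi-index supported at $n$ yields the closed form
$$ E(x_n x_n^*) = \left(\begin{array}{c} \beta^{-1} + n -1 \\ n \end{array}\right). $$

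Next I would compute the Verblunsky side. Expanding the loop-group matrix product in (\ref{eq:def_xhat}) expresses $\hat x_n$ as a sum of monomials in the $\alpha_j,\alpha_j^*$, where a factor $\alpha_j^{\pm}$ carries a power $z^{\pm j}$ and the total $z$-degree is pinned to $n$. The structural claim is that the monomials surviving in $E(\hat x_n \hat x_n^*)$ are indexed by integer subsets $S$ of total length $n$, with the boundary points $\partial S$ recording exactly which coefficients $\alpha_i$ occur. Because the $\alpha_n$ are independent with rotationally invariant laws under (\ref{eq:alpha_dist}), only the diagonal pairings $\prod_{i\in\partial S}|\alpha_i|^2$ have nonzero expectation; substituting $E(|\alpha_i|^2)=\tfrac{1}{i\beta+1}$ from (\ref{eq:alpha_moments}) then gives
$$ E(\hat x_n \hat x_n^*) = \sum_{S\in\mathcal{A}_n} \prodiS \frac{1}{i\beta+1}. $$

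With both evaluations in hand, the variance identity $E(x_n x_n^*)=E(\hat x_n \hat x_n^*)$ reads, after substitution, exactly as the super-telescoping identity $\binom{\beta^{-1}+n-1}{n}=\sum_{S\in\mathcal{A}_n}\prodiS\frac{1}{i\beta+1}$, and conversely; this is the claimed equivalence. I expect the main obstacle to lie entirely in the second step: one must verify that the degree constraint in the matrix product of (\ref{eq:def_xhat}) is matched by integer subsets in $\mathcal{A}_n$ through their boundary points, and that after taking the expectation only the diagonal products survive, so that the resulting weight is precisely $\prod_{i\in\partial S}1/(i\beta+1)$. The Gaussian-side evaluation is a more routine multi-index computation (detailed in section 5.2 of \cite{Latifi-Pickrell}), and I emphasize that the \emph{equivalence} proved here is independent of the genuine truth of either identity; the latter follows separately from the fact that the correspondence of measures (\ref{eq:eq:dnu_dist})$\leftrightarrow$(\ref{eq:alpha_dist}) forces $x_n$ and $\hat x_n$ to share all moments.
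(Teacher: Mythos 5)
Your proposal matches the paper's own argument: the paper likewise evaluates the Gaussian side via the multi-index moment computation (deferring details to section 5.2 of \cite{Latifi-Pickrell}) to get $\binom{\beta^{-1}+n-1}{n}$, evaluates the Verblunsky side by expanding (\ref{eq:def_xhat}) and using rotational invariance so that only products $\prod_{i\in\partial S}E(|\alpha_i|^2)=\prod_{i\in\partial S}\frac{1}{i\beta+1}$ over integer subsets $S\in\mathcal{A}_n$ survive, and then reads off the equivalence by substitution. Your closing remark that the equivalence is logically independent of the truth of either identity (the latter coming from the measure correspondence) is also exactly how the paper separates this proposition from Theorem \ref{thm:supertel}.
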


\subsection*{Remark on Generating functions}

Combinatorially, the expression $(\sum_{i=0}^\infty a_i x^i)^n = \sum_{j=0}^\infty c_j x^j $ can be used to calculate a sum that is indexed by different ways of grouping $j$ items in $n$ groups. We can ask whether or not such a generating function exists that sums over all integer subsets of $\mathbb{Z}^+$. Recall that summing over $n$, the super telescoping formula corresponds to the $z^n$ coefficient in the expansion

$$ \sum_{S\in \mathcal{A}} \left[ \prod_{i\in \partial S} \frac{1}{i \beta +1 }\right] z^{|S|} = \frac{1}{(1-z)^{1/\beta}}$$

This suggest a form of generating function for the right hand side. However, to our knowledge, there exist only a matrix analog of generating function given by 
    
    $$
    \begin{bmatrix} 
    A(x) & B(x) \\
    B(1/x) & A(1/x)
    \end{bmatrix} = 
    \dots
    \begin{bmatrix}
    1 & \eta_2 x^{-2} \\
    \eta_2 x^{2} & 1
    \end{bmatrix}
    \begin{bmatrix}
    1 & \eta_1 x^{-1} \\
    \eta_1 x^1 & 1
    \end{bmatrix}
    \begin{bmatrix}
    1 & \eta_0  \\
    \eta_0  & 1
    \end{bmatrix}
    $$
    
    $$
    \eta_i = \frac{1}{i \beta+1}
    $$

    This product is real valued version of the loop group factorization in $SU(1,1)$. Doing the infinite matrix product give rise to a calculation involving super telescoping formula and one can conclude
    
    $$
    A(x) = \frac{1}{(1-x)^{1/\beta}}
    $$
    \\

\end{document}